\newcommand{\ra}{\rightarrow}
\def\l{[\ }
\def\r#1{{{\ ]_{}}_{}}_{#1}}
\begin{document}

\setcounter{page}{243}
\publyear{2021}
\papernumber{2072}
\volume{182}
\issue{3}

 \finalVersionForARXIV

\title{Time-free Solution to Independent Set Problem \\
                          using P Systems with Active Membranes}

\author{Yu Jin\\
School of Mathematical Sciences\\
Zhejiang University\\
 Hangzhou 310058, Zhejiang, China\\
jinyu907@hotmail.com
\and Bosheng Song\\
College of Information Science and Engineering\\
Hunan University\\
 Changsha 410082, Hunan, China\\
boshengsong@hnu.edu.cn
\and Yanyan Li\\
College of Information Science and Engineering\\
Hunan University\\
 Changsha 410082, Hunan, China\\
yanyanli@hnu.edu.cn
\and Ying Zhu\thanks{Address for correspondence: Hunan Construction Advanced Technical College, Changsha 410015, Hunan, China.\newline \newline
          \vspace*{-6mm}{\scriptsize{Received  June 2021; \ revised August 2021.}}}
\\
Hunan Construction Advanced Technical College\\
 Changsha 410015, Hunan, China\\
532657761@qq.com
 }

\maketitle

\runninghead{Y. Jin, et al.}{Time-free Solution to Independent Set Problem using P Systems with Active Membranes}

\begin{abstract}
{Membrane computing is a branch of natural computing aiming to abstract computing models from the structure and functioning of living cells.
The computation models obtained in the field of membrane computing are usually called P systems.
P systems have been used to solve computationally hard problems efficiently on the assumption that
the execution of each rule is completed in exactly one time-unit
(a global clock is assumed for timing and synchronizing the execution of rules).
However, in biological reality, different biological processes take different times to be completed,
which can also be influenced by many environmental factors. In this work, with this biological reality,
we give a time-free solution to independent set problem using P systems with active membranes,
which solve the problem independent of the execution time of the involved rules.}
\end{abstract}

\begin{keywords}
bio-inspired computing, membrane computing, cell-like P system, time-free solution, independent set problem
\end{keywords}

\section{Introduction}\label{sec:Intro}
\emph{Membrane computing} is a computing paradigm motivated by the structure and functioning of the living cells.
It was initiated by Gh. P\u{a}un in \cite{puaun2000computing,puaun2000membrane} and has developed rapidly (already in 2003, the Institute for Scientific Information, ISI,
declared membrane computing as ``fast emerging research area in computer science'', see http://esi-topics.com).
The computation devices considered in the framework of membrane computing are usually called \emph{P systems}.
There are three main classes of P systems investigated:
cell-like P systems \cite{puaun2000computing,jiang2019computational}, tissue-like P systems \cite{martin2003tissue,song2021computational,song2021monodirectional,song2020monodirectional}, neural-like P systems \cite{ionescu2006spiking,zhang2014weighted,carandang2019handling,lazo2021return}.
In recent years, many variants of spiking neural P systems have been considered \cite{pan2009spiking,pan2010spiking,pan2011spiking,pan2011time,pan2011spiking2}.
For general information in membrane computing, one may consult \cite{gh2002membrane,paun2010handbook,song2021survey,cturlea2019search,zandron2020bounding}
and to the membrane computing website \url{http://ppage.psystems.eu/} for details.

\medskip
\emph{P systems with active membranes} are a class of cell-like P systems, which were introduced in \cite{paun2001computing}. Generally speaking,
P systems with active membranes consist of a hierarchical structure composed by $q$ membranes, where the outermost membrane is called the \textit{skin} membrane. Membranes delimit \emph{regions}, that contain some objects
(represented by symbols of an alphabet),
the region outside the membrane system is called the \emph{environment}.
A feature of these systems is the fact that the membranes are polarized,
they have one of the three possible ``electrical charges'': positive ($+$),
negative ($-$) or neutral ($0$). The whole membrane structure,
the charge of membranes and the objects contained in membranes evolve by using the following types of rules:
(a) object evolution; (b) object communication; (c) membrane dissolution;
(d) membrane division. Usually, the rules are applied in a nondeterministic and maximally parallel way (any object and membrane
which can evolve by a rule of any form, should evolve).

\medskip
P systems with active membranes
have been successfully used to solve computationally hard problems efficiently \cite{alhazov2003solving,alhazov2004trading,pan2011computation,pan2005solving,pan2006further,puaun2004power}.
All these above mentioned P systems with active membranes work in a parallel and synchronized way (a global clock is assumed to mark the time for the system), in each tick of the global clock, all the applicable rules are applied simultaneously, and the execution of rules takes exactly one time unit. However, in biological reality, different biological processes take different times to be completed,
which can also be influenced by many environmental factors\cite{valencia2020simulation}. Thus, a \emph{timed P system} was proposed in \cite{cavaliere2005time}, to each rule a natural number representing the execution time of the rule is associated.
A particular class of timed P systems is called \emph{time-free P systems},
such P systems produce always the same result, independent from the execution times of the rules.

\medskip
The notion of time-free solution to hard computational problems was introduced in \cite{cavaliere2012timefree}.
In \cite{song2014time}, time-free solution to {\tt SAT} problem using P systems with active membranes was present,
where the computation result is independent of the execution time of the involved rules.
Although independent set problem can be reduced to {\tt SAT} problem in polynomial time,
it remains open how to compute the reduction by P systems.
In this work, we give a direct solution to independent set problem using P systems with active membranes,
instead of computing the reduction by P systems.
The solution to independent set problem using P systems with active membranes
is time-free in the sense that the computation result is independent of the execution time of the involved rules.

The organization of this paper is described as follows. Section
\ref{sec:activeP} presents some fundamental conceptions of language and automata theory and
the notion of timed P systems with active membranes. A time-free solution to independent set problem by P systems with active
membranes is investigated in Section
\ref{sec:tf-solution}. Finally, conclusions and
some future works are given in Section \ref{sec:Con}.

\section{P systems with active membranes}\label{sec:activeP}
\subsection{Preliminaries}
It is useful for the reader to have some familiarity with notion and notation from formal language theory \cite{rozenberg1997handbook},
as well as the definition of P systems with active membranes \cite{gh2002membrane}.

For an alphabet $V$, $V^*$ denotes the set of all finite strings of symbols from $V$,
while the empty string is denoted by $\lambda$, and the set of all non-empty strings over $V$ is denoted by $V^+$.
The length of a string $x$ is denoted by $|x|$ and by $card(A)$ the cardinality of the set $A$.

By $\mathbb{N}$ we denote the set of non-negative integers.
A multiset over an alphabet $V=\{a_1,a_2,\dots,a_n\}$ is a mapping $m$: $V\rightarrow \mathbb{N}$.
We can represent a multiset $m$ over $V$ as any string $w\in V^*$ such that $|w|_{a_i}=m(a_i),1\leq i\leq n$.
That is, $m(w)=(m(a_1),m(a_2),\dots,m(a_n))$.
We usually represent $m$ by the string $a_1^{m(a_1)}\dots a_k^{m(a_k)}$ or by any permutation of this string.

\subsection{P systems with active membranes}

\begin{definition}
A \emph{P system with active membranes} of degree $m$ is a construct
$$\Pi=(O,H,\mu,w_1,\dots,w_m,R),$$
where:
\begin{enumerate}
\item[(i)] $m\ge 1$ is the degree of the system;
\item[(ii)] $O$ is the alphabet of {\it objects};
\item[(iii)] $H$ is a finite set of {\it labels} for membranes;
\item[(iv)] $\mu$ is the initial {\it membrane structure}, consisting of $m$ membranes; membranes are labelled in an injective way with elements of $H$ and are electrically polarized, being possible charge positive ($+$), negative ($-$) or neutral ($0$);
\item[(v)] $w_1,\dots,w_m$ are strings over $O$, describing the {\it
initial multisets of objects} placed in the $m$ regions of $\mu$;
\item[(vi)] $R$ is a finite set of {\it development rules}, of the following types:
\begin{enumerate}
\item[(a)] $\l a\ra v\r h^\alpha$, $h\in H,\alpha\in\{+,-,0\}, a\in O,
v\in O^*$.\\ (object evolution rules, associated with membranes
and depending on the label and the charge of the membranes);

\item[(b)] $a\l \r h^{\alpha_1}\ra \l b \r h^{\alpha_2}$, $h\in H,
\alpha_1,\alpha_2\in\{+,-,0\},a,b\in O$.\\ (send-in rules; an object
is sent into the membrane, possibly modified during this process; also the
polarization of the membrane can be modified, but not its label);

\item[(c)] $\l a \r h^{\alpha_1}\ra \l \r h^{\alpha_2}b$, $h\in H,
\alpha_1,\alpha_2\in\{+,-,0\},a,b\in O$.\\ (send-out rules; an object
is sent out of the membrane, possibly modified during this process; also the
polarization of the membrane can be modified, but not its label);

\item[(d)] $\l a \r h^\alpha\ra b$, $h\in H,\alpha\in\{+,-,0\},
a,b\in O$.\\ (dissolving rules; in reaction with an object, a membrane can be
dissolved, while the object specified in the rule can be modified);

\item[(e)] $\l a \r h^{\alpha_1}\ra \l b \r h^{\alpha_2} \l c
\r h^{\alpha_3}$, $h\in H,\alpha_1,\alpha_2,\alpha_3\in\{+,-,0\},a,b,c\in O$.\\ (division
rules for elementary membranes; in reaction with an object, the membrane is
divided into two membranes with the same label, possibly of different polarizations;
the object specified in the rule is replaced in the two new membranes by possibly new objects;

\item[(f)] $ [ [\ ]_{h_1}^{\alpha_1}\dots \l  \ ]_{h_k}^{\alpha_1}\
[\ ]_{h_{k+1}}^{\alpha_2}\dots [\ ]_{h_n}^{\alpha_2}
\r {h_0}^{\alpha_0}\ra [ [\ ]_{h_1}^{\alpha_3}\dots
[ \ ]_{h_k}^{\alpha_3}\r {h_0}^{\alpha_5}\ [ [\ ]_{h_{k+1}}^{\alpha_4}\dots
[ \ ]_{h_n}^{\alpha_4}]_{h_0}^{\alpha_6}$, $k\ge 1, n>k,h_i\in H,
0\le i\le n$, and $\alpha_0,\dots,\alpha_6\in\{+,-,0\}$ with $\{\alpha_1,\alpha_2\}
=\{+,-\}$.\\ (if the membrane with label $h_0$ contains other membranes than
those with the labels $h_1,\dots,h_n$ specified above, then they must have
neutral charges; these membranes are duplicated and then are part of the contents of both new copies of the
membrane $h_0$).
\end{enumerate}
\end{enumerate}
\end{definition}

The above rules can be considered as ``standard'' rules of P systems with active membranes;
the following two rules can be considered as the extension of rules (a) and (e), respectively.

\begin{enumerate}
\item[(a$'$)] $\l u\ra v\r h^\alpha$, $h\in H,\alpha\in\{+,-,0\}, u,v\in O^*$.\\ (cooperative evolution rules, associated with membranes
and depending on the label and the charge of the membranes);
\item[(e$'$)] $\l a \r h^{\alpha}\ra \l a_1 \r {h_1}^{\alpha_1}\l a_2 \r {h_2}^{\alpha_2}\dots \l a_d \r {h_d}^{\alpha_d}$, $h,h_1,\dots,h_d\in H$, $\alpha,\alpha_1,\dots, \alpha_d\in\{+,-,0\}$, $a,a_1,\dots$, \\ $a_d\in O$, $d\geq 2$.\\
($h$ is an elementary membrane; in reaction with an object, the membrane is
divided into $d$ membranes not necessarily with the same label; also the polarizations of the new membranes
can be different from the polarization of the initial one;
the object specified in the rule is replaced in the $d$ new membranes by possibly new objects).
\end{enumerate}

For a detailed description of using these rules
we can refer to \cite{paun2001computing,gh2002membrane}.
Here, we mention that the rules are used in the non-deterministic maximally parallel
manner, and we assume that the rules are applied in the bottom-up manner:
in any given step, one uses first the evolution rules of type (a), (a$'$),
then the other rules which also involve a membrane;
moreover, one uses first the rules of types (b), (c), (d), (e), (e$'$)
and then those of type (f). We also remark that
at one step a membrane $h$ can be subject of only one rule of types (b)-(f) and (e$'$).
A configuration in a P system with active membranes is described by the membrane structure,
together with charge on each membrane and the multisets of objects in each region.
A P system with active membranes evolves from one configuration to the next one by applying rules as mentioned above.
A sequence of transitions between configurations defines a computation.
A computation halts if it reaches a configuration where no rule can be applied in any membrane.
The result of a computation is the multiset of objects contained into an output membrane, or
emitted from the skin of the system.

\subsection{Timed P systems with active membranes}
The notion of timed P system was proposed from \cite{cavaliere2005time}. In this work,
we consider timed P systems with active membranes.

A \emph{timed P system with active membranes} $\Pi(e)=(O,H,\mu,w_1,\dots,w_m,R,e)$ is obtained
by adding a time-mapping $e: R\rightarrow \mathbb{N}$ to
a P system with active membranes $\Pi=(O,H,\mu,w_1,\dots,w_m,R)$,
where $\mathbb{N}$ is the set of natural numbers and the time-mapping
$e$ specifies the execution times for the rules.

A timed P system with active membranes $\Pi(e)$ works in the following way.
An external clock is assumed, which marks time-units of equal length, starting from instant 0.
According to this clock, the step $t$ of computation is defined by the period of time that goes from instant $t-1$ to instant $t$.
If a membrane $i$ contains a rule $r$ from types (a) -- (f), (a$'$) and (e$'$) selected to be executed,
then execution of such rule takes $e(r)$ time units to complete.
Therefore, if the execution is started at instant $j$, the rule is completed at instant $j + e(r)$
and the resulting objects and membranes become available only at the beginning of step $j + e(r) + 1$.
When a rule $r$ from types (b) -- (f) and (e$'$) is applied, then the occurrences of symbol-objects and the membrane subject to this rule
cannot be subject to other rules from types (b) -- (f) and (e$'$) until the implementation of the rule completes.
At one step, a membrane can be subject to several rules of types (a) and (a$'$).

\subsection{Recognizer timed P systems with active membranes}

In this subsection, we first present the definition of recognizer P systems with active membranes, then the notion of recognizer timed P systems with active membranes is given.

\begin{definition}
A recognizer P system with active membranes of degree $m\geq 1$ with input is a tuple $\Pi=(O,H,\Sigma,\mu,$ $w_1,\dots,w_m,R,i_{out},i_{in})$, where:
\begin{itemize}
\item The tuple $(O,H,\mu,w_1,\dots,w_m,R,i_{out})$ is a P system with active membranes.
\item $\Sigma$ is an (input) alphabet strictly contained in $O$.
\item The initial multisets $w_1, \dots , w_m$ are over
  $O\setminus\Sigma$.
\item $i_{in} \in \{1, \dots ,m\}$ is the label of a distinguished
  (input) membrane.
\item The working alphabet contains two
distinguished elements {\tt yes} and {\tt no}.
\item All the computations halt.
\item If $\mathcal C$ is a computation of the system, then either
  object {\tt yes} or object {\tt no} (but not both) must appear in
  the environment when the system halts. Note that object {\tt yes} or object {\tt no} can be present in a non-halting configuration.
\end{itemize}
\end{definition}

For recognizer P systems with active membranes, we say that a computation is an {\em accepting computation} (resp., {\em rejecting computation}) if the object {\tt yes} (resp., {\tt no}) appears in the environment associated with the corresponding halting configuration.

Here, differently from the usual interpretation, we allow yes and no objects to exit into the environment before reaching the halting configuration. In that case they are not providing the answer to the decision problem.

For each multiset $w$ over the input alphabet $\Sigma$, the {\it
computation of P systems with active membranes $\Pi$ with input} $w$ starts from the
configuration of the form
$(w_1,\dots,w_{i_{in}} +
w,\dots, w_m, \mu)$,
that is, the input multiset $w$
has been added to the contents of the input membrane
$i_{in}$. Therefore, we have an initial configuration associated with
each input multiset $w$ (over the input alphabet $\Sigma$) in this
kind of systems.

\begin{definition}
A recognizer timed P system with active membranes of degree $m\geq 1$ is a tuple $(\Pi,e)$,
where $\Pi$ is a recognizer P system with active membranes and $e$ is a time-mapping of $\Pi$.
\end{definition}

\subsection{Time-free solutions to decision problems by P systems with active membranes}
In this subsection, we introduce the definition of time-free solutions to decision problems by P systems with active membranes \cite{song2014time}.

In a timed P systems with active membranes,
a computation step is called a {\em rule starting step} (RS-step, for short)
if at this step at least one rule starts its execution.
In the following, we will only count RS-steps as the definition of
time-free solutions to decision problems by P systems with active membranes
(i.e., steps in which some object ``starts'' to evolve or some membrane ``starts'' to change).
In timed P systems with active membranes,
the execution time of rules is determined by the time mapping $e$,
and it is possible the existence of rules whose execution time is inherently exponential,
therefore, the number of RS-steps in a computation characters how ``fast" the constructed
P system with active membranes solves a decision problem in the context of time-freeness.

\begin{definition}
Let $X=(I_X, \Theta_X)$ be a decision problem. We say that $X$ is solvable in a {\em time-free polynomial time}
by a family of recognizer P systems with active membranes $\Pi=\Pi_u, u \in I_X$
(we also say that the family $\Pi$ is a {\em time-free solution} to the decision problem $X$) if the following items are true:
\begin{itemize}
\item the family $\Pi$ is polynomially uniform by a Turing machine;
that is, there exists a deterministic Turing machine working in polynomial time which constructs the system $\Pi_u$ from the instance $u \in I_X$.
\item the family $\Pi$ is \emph{time-free sound} (with respect to $X$);
that is, for any time-mapping $e$, the following property holds:
if for each instance of the problem $u \in I_X$ such that there
exists an accepting computation of $\Pi_u(e)$, we have $\Theta_X(u)=1$.
\item the family $\Pi$ is \emph{time-free complete} (with respect to $X$);
that is, for any time-mapping $e$, the following property holds:
if for each instance of the problem $u \in I_X$ such that $\Theta_X(u)=1$,
every computation of $\Pi_u(e)$ is an accepting computation.
\item the family $\Pi$ is \emph{time-free polynomially bounded}; that is,
there exists a polynomial function $p(n)$ such that for any time-mapping $e$ and for each $u\in I_X$,
all computations in $\Pi_u(e)$ halt in, at most, $p(|u|)$ RS-steps.
\end{itemize}
\end{definition}

\section{A time-free solution to independent set problem by P systems with active membranes}\label{sec:tf-solution}

In this section, we first introduce the definition of independent set problem,
then construct a family of P systems with active membranes that solve independent set problem in a time-free polynomial time.

\medskip
\textbf{Independent Set Problem}

\smallskip
INSTANCE: A undirected graph $\gamma=(V,E)$, where $V=\{v_1,v_2,\dots,v_n\}$ is the set of vertices,
$E$ is the set of edges with elements of the form $(v_i,v_j)$, $v_i,v_j\in V$, $i\neq j$,
and a positive integer $k<card(V)$.

\smallskip
QUESTION: Is there a subset $V'\subseteq V$ with $card(V')\geq k$ such that no two vertices in $V'$ are jointed by an edge in $E$?

\begin{theorem}\label{Th:indepset}
Independent set problem can be solved by a family of P systems with active membranes
in a time-free polynomial time with respect to the size of the problem.
\end{theorem}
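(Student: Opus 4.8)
The plan is to follow the classical generate-and-check strategy for NP problems with active membranes, engineering every phase transition to depend on object availability rather than on the global clock, so that correctness survives an arbitrary time-mapping $e$. Starting from a single elementary working membrane inside the skin, I would use elementary-division rules of type (e) to branch on the membership of each vertex $v_i$ in the candidate set $V'$. A program-counter object would trigger the division on $v_{i+1}$ only after the division on $v_i$ produces its output, so that after $n$ logical steps I obtain $2^n$ membranes, each carrying objects that record a distinct subset $V' \subseteq V$ (say $t_i$ if $v_i \in V'$ and $f_i$ otherwise). Because the division rules are applied uniformly, these branches remain synchronized in RS-step count.

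Within each membrane I would then verify two conditions: independence --- for every edge $(v_i,v_j)\in E$, detecting the joint presence of $t_i$ and $t_j$ and marking the membrane as non-independent --- and cardinality --- tallying the selected objects and testing $|V'|\ge k$. A surviving membrane emits \texttt{yes} toward the skin, which forwards it to the environment; the \texttt{no} answer is produced by a completion object whose advance is gated by the logical end of the checking phase. Time-free soundness and completeness will then follow from the invariant that no rule starts before the objects it consumes have been produced by the completed rules of the previous logical step, so the logical order of the three phases is preserved regardless of individual execution times. Polynomial uniformity is routine, since the numbers of objects, membranes and rules are polynomial in $n$, $|E|$ and $\log k$, and a deterministic Turing machine can emit them in polynomial time.

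The hard part will be the time-free treatment of the \texttt{no} answer together with the polynomial bound on RS-steps. An arbitrary $e$ desynchronizes the $2^n$ branches as soon as a checking step uses content-dependent rules of differing execution time; once the branches drift apart their rule-starts fall on distinct real-time steps, so the RS-step count can in principle blow up, and worse, the \texttt{no} signal might be emitted before some slow branch has finished checking. The resolution I would pursue is to make the checking uniform in timing --- driving the edge- and cardinality-checks by a single traveling object whose state-transition rules are applied identically in every branch, so that all branches advance in lockstep on RS-steps while the content-dependent verdict is recorded on a separate flag that does not perturb this common clock. Establishing that this scheme keeps all branches synchronized, emits \texttt{no} only after the last branch has reported, and halts within $O(\mathrm{poly}(n,|E|,\log k))$ RS-steps for every time-mapping $e$ is where the principal technical effort lies.
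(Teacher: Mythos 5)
Your overall plan (generate all $2^n$ candidate subsets by membrane division, check each edge and the cardinality inside every membrane, and gate each phase transition on object availability rather than on the clock) is the same strategy the paper follows, and your diagnosis of where the difficulty lies is accurate. But the proposal stops exactly at that difficulty, so there is a genuine gap: the mechanism that makes the answer objects time-free is never constructed, and two of your stated design choices are, as written, not implementable by rules at all, because each requires a membrane to react to the \emph{absence} of an event. First, you want {\tt no} to be emitted ``only after the last branch has reported''; detecting that all $2^n$ desynchronized branches have finished is a global-completion test, and no rule of types (a)--(f) can be triggered by the fact that nothing further will happen. Second, you want each \emph{surviving} membrane to emit {\tt yes}; but survival means that no violation was ever detected, again a non-event that cannot fire a rule. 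The paper inverts both problems instead of solving them head-on: {\tt no} is sent to the environment at the very first step (rule $r_{16}$), flipping the skin to positive, so {\tt no} is the \emph{default} answer and needs no completion detection; membranes whose guess is \emph{invalid} dissolve (rules $r_{14}$, $r_{15}$), releasing a witness object $d'$ (at least one such membrane always exists, since the empty guess fails the cardinality test because $k\geq 1$); a {\tt yes} is created only when such a $d'$ meets a still-existing membrane $0$ (rule $r_{18}$), and the single {\tt yes} that exits the positive skin flips it to negative (rule $r_{20}$), which simultaneously locks out any further {\tt yes} objects and lets {\tt no} be pulled back inside (rule $r_{17}$).

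On the synchronization and the RS-step bound, your ``traveling object'' idea points in the right direction (it resembles the paper's chain of membranes relabelled $n+2, n+3, \dots, n+2+s$), but the decisive quantitative trick is missing from your sketch. The paper injects into every membrane $0$, via rule $r_8$, a complete set of dummy witnesses $e_1^2\cdots e_s^2\, c^{n-k+1}$, so that the reference chain executes the \emph{same} sequence of rules $r_9, r_{10,i}, r_{11,i}, r_{12,i}, r_{13}$ in every branch regardless of the guess, delivering exactly $s+1$ copies of $d$ to membrane $0$; the guess itself only contributes one extra $d$ per violated constraint, and invalidity is decided purely arithmetically by the cooperative rule $[d^{s+2} \rightarrow d']_0^0$. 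Because a time-mapping assigns one duration per \emph{rule} (not per application), identical rule chains take identical real time in all membranes; content-dependent firings add objects but never change which rules the common chain executes, and this is what yields the bound of $5n+4s+12$ RS-steps independently of $e$. Without this (or an equivalent) device, your ``separate flag that does not perturb the common clock'' remains a wish rather than a construction, and the time-free soundness of the {\tt no} answer and the polynomial RS-step bound --- which are the actual content of the theorem --- are left unproven.
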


\begin{proof} Let us consider a undirected graph $\gamma=(V,E)$,
where $V=\{v_1,v_2,\dots,v_n\}$ is the set of vertices,
$E$ is the set of edges with elements of the form $(v_i,v_j)$,
$v_i,v_j\in V$, $i\neq j$, and a positive integer $k<card(V)$.

For the given undirected graph $\gamma$, suppose that the undirected graph $\gamma$ has $s$ ($s\leq (n^2-n)/2$) edges which are ordered,
we construct the P systems with active membranes
$$\Pi_{\gamma}=(O,H,\mu,w_0,w_1,w_{n+3+s}, R),$$
where
\begin{itemize}
  \item $O=\{v_i,v_i',v_i'',g_i,g'_i\mid 1\le i\le n\}\cup\{a_i,e_i\mid 1\le i\le s\}\cup \{{\tt yes}, {\tt no},a,a_{s+1},b_0,b,c,d,d'$, $d'',d'''\}\cup\{b_i \mid 1\leq i\leq s+1\}$ is the alphabet,
  \item $H=\{-1,0,1,2,\dots,n+3+s\}$ is the set of labels of the membranes,
  \item $\mu=\l \l \l \r 1^0 \r 0^0 \r {n+3+s}^0$ is initial membrane structure,
  \item $w_0=\lambda$ (that is, membrane 0 contains no object in the initial configuration),
  \item $w_1=\{b,v_1,v_2,\dots, v_n\}$ is the initial multiset contained in membrane 1,
  \item $w_{n+3+s}=\{{\tt no}\}$ is the initial multiset contained in membrane $n+3+s$,
\end{itemize}
and the set $R$ contains the following rules (we also give explanations about the role of these rules in the computation of solving independent set problem):

\medskip
$r_{1,i}: \l v_i \r i^0\rightarrow \l v_i' \r {i+1}^0 \l v_i'' \r {i+1}^0$, $1\leq i\leq n$.

$r_{2,i}: \l \!v_i'\rightarrow e_{h_{i,1}}\dots e_{h_{i,j_{i}}} g_i\r {i+1}^0$, $1\leq\! i\!\leq n$, and each edge $e_{h_{i,1}},\dots, e_{h_{i,j_{i}}}$ is linked with vertex$\;v_i$.

$r_{3,i}: \l v_i''\rightarrow cg'_i \r {i+1}^0$, $1\leq i\leq n$.

$r_{4,i}: \l g_i \r {i+1}^0\rightarrow g_i \l \r {i+1}^+$, $1\leq i\leq n$.

$r_{5,i}: \l g'_i \r {i+1}^0\rightarrow g'_i \l \r {i+1}^-$, $1\leq i\leq n$.

$r_{6,i}: \l \l \r {i+1}^+ \l \r {i+1}^-\r 0^0\rightarrow \l \l \r {i+1}^0\r 0^0\l \l \r {i+1}^0\r 0^0$, $1\leq i\leq n$.

\medskip
At step 1, the rule $r_{1,1}: \l v_1 \r 1^0\rightarrow \l v_1' \r 2^+ \l v_1'' \r 2^-$ is applied,
producing the objects $v_1'$ and $v_1''$, which are placed in two separate copies of membrane 2.
Note that when the membrane with label 1 is divided by the rule $r_{1,1}$, the obtained two membranes have label 2 instead of label 1.
For any given time-mapping $e$, the execution of rule $r_{1,1}$ completes in $e(r_{1,1})$ steps.
As we will see below, at step 1, except for the application of rule $r_{1,1}$,
rule $r_{14}: \l {\tt no} \r {n+3+s}^0 \rightarrow\l \r {n+3+s}^+ {\tt no}$ also starts;
and from step 2 to step $e(r_{1,1})$, there is no rule starting.
Thus, during the execution of rule $r_{1,1}$ (i.e., from step 1 to step $e(r_{1,1})$),
there is one RS-step (here the rule $r_{14}$ does not count). Note that the number of RS-step during the execution of rule
$r_{1,1}$ is independent from the time-mapping $e$.

\medskip
After the execution of rule $r_{1,1}$ completes, the application of rule
$r_{2,1}: \l v_1'\rightarrow e_{h_{1,1}}\dots e_{h_{1,j_{1}}}g_1 \r 2^0$ and rule
$r_{3,1}: \l v_1''\rightarrow c g'_1\r 2^0$ starts. Note that the application starts at the same step,
but they may complete at different steps. For any given time-mapping $e$,
the execution of rule $r_{2,1}$ and rule $r_{3,1}$ takes one RS-step.

When the execution of rule $r_{2,1}$ (resp., $r_{3,1}$) completes, rule $r_{4,1}$ (resp., $r_{5,1}$) starts to use,
a dummy object is sent out of the membrane, the charge is change to positive (resp., negative). Note that the executive of rules $r_{2,1}$ and $r_{3,1}$
may start at different steps.

\medskip
Rule $r_{6,1}:\l \l \r 2^+ \l \r 2^-\r 0^0\rightarrow \l \l \r 2^0\r 0^0\l \l \r 2^0\r 0^0$
can be applied only when the execution of rule $r_{4,1}$ and rule $r_{5,1}$ completes.
For any given time-mapping $e$, the execution of rule $r_{6,1}$ takes $e(r_{6,1})$ steps, where the number of RS-step is one.

By the application of rule $r_{6,1}$, the polarization of the membranes with label 2 changes to neutral.
In this way, the rule $r_{1,2}: \l v_2 \r 2^0\rightarrow \l v_2' \r 3^+ \l v_2'' \r 3^-$ is enabled and applied.
Similar to the case of vertex $v_1$, the process of execution of vertex $v_2$ takes five RS-steps,
and four membranes with label 0 are generated, each membrane with label 0 contains a membrane with label 3.
In general, after $5n$ RS-steps, $2^n$ separate copies of membrane with label 0 are generated,
all of which are placed in the membrane with label $n+3+s$; each membrane with label 0 contains a membranes with label $n+1$ (see Fig. \ref{fig:membrane-struc}).

\begin{figure}[htb]
\vspace{1mm}
	\centering
	\centerline{\unitlength 0.8mm}
		\begin{tikzpicture}
		
		\node [ font=\fontsize{16}{16}\selectfont] at (10.7,3.5) {...};
		\draw [rounded corners] (15.7,1.5)node [below ]{$n+3+s$} -- (5.4,1.5) -- (5.4,5.2) --(15.7, 5.2)-- cycle;
		\draw [rounded corners] (15,2.3)node [below ]{$0$} -- (11.5,2.3) -- (11.5,4.6) --(15, 4.6)-- cycle;
		\draw [rounded corners] (10,2.3)node [below ]{$0$} -- (6.2,2.3) -- (6.2,4.6) --(10, 4.6)-- cycle;
		\draw [rounded corners] (14,3)node [below ]{$n+1$} -- (12,3) -- (12,4) --(14, 4)-- cycle;
		\draw [rounded corners] (9,3)node [below ]{$n+1$} -- (7,3) -- (7,4) --(9, 4)-- cycle;
		\end{tikzpicture}
	\caption{The membrane structure of the system $\Pi_{\gamma}$ after $3n$ RS-steps}
	\label{fig:membrane-struc}
\end{figure}
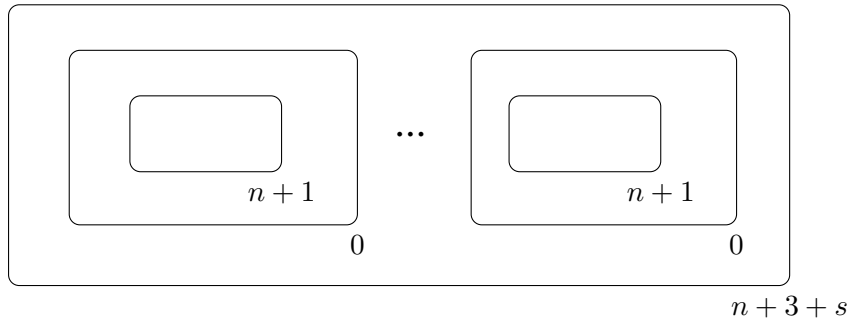

$r_7: \l b \r {n+1}^0\rightarrow \l b_0 \r {n+2}^+ \l b_1\r {n+2}^0\dots \l b_s\r {n+1+s}^0\l b_{s+1}\r {n+2+s}^0$.

$r_8: \l b_0\rightarrow aa_1\dots a_s a_{s+1}e^2_1\dots e^2_s c^{n-k+1} \r {n+2}^+$.

$r_9: \l a \r {n+2}^+\rightarrow \l \r {n+2}^0 a$.

$r_{10,i}: \l e^2_i\rightarrow d \r {n+1+i}^0$, $1\leq i\leq s$.

$r_{11,i}: \l d \r {n+1+i}^0\rightarrow \l \r {n+1+i}^- d$, $1\leq i\leq s+1$.

$r_{12,i}: \l a_i \r {n+1+i}^-\rightarrow \l d''\r {n+2+i}^0 \l d'''\r {-1}^0$, $1\leq i\leq s$.

$r_{13}: \l c^{n-k+1}\rightarrow d \r {n+2+s}^0$.

$r_{14}: \l d^{s+2}\rightarrow d' \r 0^0$.

$r_{15}: \l d' \r 0^0\rightarrow d'$.

\medskip
Each neutral membrane with label $n+1$ containing object $b$ is produced at the same time.
By applying the rule $r_7$, one positive membrane with label $n+2$ and $s+1$ neutral membranes
with labels $n+2+i$ ($0\leq i\leq s$) are produced, respectively.
After the execution of rule $r_7$ completes, the application of rules $r_8$, $r_{10,i}$
(if at least two copies of object $e_i$ exist in membranes $n+1+i$ ($1\leq i\leq s$) which are generated by rule $r_7$) and $r_{13}$
(if there are at least $n-k+1$ copies of object $c$ in membrane $n+2+s$ which is generated by rule $r_5$) starts at the same step,
but they may complete at different steps. When the execution of rule $r_8$ completes,
the execution of rule $r_9: \l a \r {n+2}^+\rightarrow \l \r {n+2}^0 a$ starts,
where object $a$ exits the membrane, changing its polarization from positive to neutral.
After the execution of rule $r_9$, the application of rules $r_{10,i},r_{11,i}$ and $r_{12,i}$ will be applied one by one
(at this time, the evolution objects which are generated by rule $r_8$, for the rule $r_{12,i}$,
the membrane containing object $a_i$ is divided into two membranes with label $n+2+i$ and $-1$, respectively,
where the membrane with label $-1$ is a ``dummy" membrane that will not evolve anymore).
In this way, the rules of types $r_{10,i},r_{11,i},r_{12,i}$ are applied as many times as possible.
At some moment, when the membrane with label $n+2+s$ is generated by the rule $r_{12,s}$,
the application of rule $r_{13}: \l c^{n-k+1}\rightarrow d \r {n+2+s}^0$ starts,
$n-k+1$ copies of object $c$ evolve to object $d$. After the execution of rule $r_{13}$,
the application of rule $r_{11,s+1}: \l d \r {n+2+s}^0\rightarrow \l \r {n+2+s}^- d$ starts,
object $d$ exits the membrane, changing its polarization from neutral to negative.

It is important to note that when the execution of all rules $r_9,r_{10,i},r_{11,i},r_{12,i},r_{13}$
(the evolution objects which are generated by rule $r_8$) completes,
the execution of rules $r_{10,i},r_{11,i},r_{13}$ (these rules are enabled due to the previous application of rule $r_7$,
i.e., the membranes used for rules $r_{10,i},r_{11,i},r_{13}$ are generated by rule $r_7$) must already complete.

\medskip
If $s+2$ copies of object $d$ are present in membrane 0, the application of rule
$r_{14}: \l d^{s+2}\rightarrow d' \r 0^0$ starts, object $d'$ is produced
(it means there are $s+1$ copies of object $d$ which are evolved from $e^2_1,e^2_2,\dots,e^2_s,c^{n-k+1}$ (generating by rule $r_8$),
and one copy of $d$ comes from one of the membranes with label $n+2+i$ $(0\leq i\leq s)$ which are produced by rule $r_7$).
After the execution of rule $r_{14}$, the application of rule
$r_{15}: \l d' \r 0^0\rightarrow d'$ starts, where the membrane with label 0 is dissolved.

Note that we need to check that we have at least $k$ selected vertices,
hence we dissolve the membranes with label 0 only when less than $k$ vertices are marked with a prime.

\medskip
For any given time-mapping $e$, the execution of rule $r_7$ completes in $e(r_7)$ steps,
where there is one RS-step; the execution of rules $r_8$ and $r_{10,i}$
(these rules are applied in the membranes with labels $n+2+i$ $(0\leq i\leq s)$
which are produced by rule $r_7$) takes one RS-step; the execution of rules $r_{11,i}$
(object $d$ is generated by the membrane $n+2+i$ $(0\leq i\leq s)$, which is produced by rule $r_7$) takes at most $s+1$ RS-steps;
the execution of rules $r_9,r_{10,i},r_{11,i},r_{12,i}$ ($1\leq i\leq s$)(the evolution objects which are generated by rule $r_8$) takes $3s+1$ RS-steps; the execution of rules $r_{13},r_{14},r_{15},r_{11,s+1}$ takes four RS-steps;
thus, the total number of RS-steps is $4s+8$.

\medskip
$r_{16}: \l {\tt no} \r {n+3+s}^0 \rightarrow\l \r {n+3+s}^+ {\tt no}$.

$r_{17}: {\tt no}\l \r {n+3+s}^-\rightarrow \l {\tt no} \r {n+3+s}^-$.

$r_{18}: d'\l \r 0^0 \rightarrow \l {\tt yes} \r 0^0$.

$r_{19}: \l {\tt yes} \r 0^0\rightarrow \l \r 0^0 {\tt yes}$.

$r_{20}: \l {\tt yes} \r {n+3+s}^+\rightarrow \l \r {n+3+s}^- {\tt yes}$.

\medskip
At step 1, the rule $r_{16}: \l {\tt no} \r {n+3+s}^0 \rightarrow\l \r {n+3+s}^+ {\tt no}$ is applied,
object ${\tt no}$ exits the skin membrane $n+3+s$, changing its polarization from neutral to positive.

\medskip
When the execution of rule $r_{15}$ completes,
if no membrane with label 0 is present in the skin membrane with label $n+3+s$,
then the rules $r_{18}: d'\l \r 0^0 \rightarrow \l {\tt yes} \r 0^0$,
$r_{19}: \l {\tt yes} \r 0^0\rightarrow \l \r 0^0 {\tt yes}$ and
$r_{20}: \l {\tt yes} \r {n+3+s}^+\rightarrow \l \r {n+3+s}^- {\tt yes}$ cannot be applied,
thus, rule $r_{17}: {\tt no}\l \r {n+3+s}^-\rightarrow \l {\tt no} \r {n+3+s}^-$ cannot be applied.
In this case, when the computation halts, object {\tt no} remains in the environment,
telling us that there is not a subset $V'\subseteq V$ with $card(V')\geq k$ such that
no two vertices in $V'$ are jointed by an edge in $E$.
Note that the system will take computation steps to complete the execution of rule $r_{16}$,
but there is no RS-step from this moment to the end of the execution of rule $r_{16}$.

\medskip
When the execution of rule $r_{15}$ completes,
if some membranes with label 0 still exist,
then the rule $r_{18}$ will be applied, where object $d'$ evolves to {\tt yes},
and object {\tt yes} enters the membrane.
When the execution of rule $r_{18}$ completes, the application of rule $r_{19}$ starts,
object {\tt yes} exits the membrane with label 0.
At this moment, if the execution of rule $r_{16}$ is not yet completed,
then no rule can be started in the system before the execution of rule $r_{16}$ completes.
Only when the execution of rule $r_{16}$ completes,
the polarization of membrane $n+3+s$ changes to positive,
and the rule $r_{20}$ is enabled and applied.
By applying the rule $r_{20}$, object {\tt yes} exits the membrane with label $n+3+s$,
changing its polarization from positive to negative. Therefore, the other objects {\tt yes}
remaining in membrane $n+3+s$ are not able to continue exiting into the environment.
After the execution of rule $r_{20}$ completes, the rule $r_{17}$ is enabled and applied,
object {\tt no} enters membrane $n+3+s$. In this case, when the computation halts,
one copy of {\tt yes} appears in the environment,
telling us that there is a subset $V'\subseteq V$ with $card(V')\geq k$ such that
no two vertices in $V'$ are jointed by an edge in $E$.

\medskip
It is clear that for any time-mapping $e: R \ra \mathbb{N}$,
the object {\tt yes} appears in the environment
when the computation halts
if and only if the independent set exists;
and the object {\tt no} stays in the environment
when the computation halts
if and only if the independent set does not exist.
So, the system $\Pi_{\gamma}$ is time-free sound and time-free complete.

For any time-mapping $e: R \ra \mathbb{N}$, if the independent set exists, the computation takes at most $5n+4s+12$ RS-steps, the system halts. If the independent set does not exist, the computation takes at most $5n+4s+8$ RS-steps, and the system halts. Thus, the family ${\bf \Pi}$ is time-free polynomially bounded.

\medskip
The family ${\bf \Pi}= \{\Pi_{\gamma} \mid \gamma \textrm{ is an instance of independent set problem}\}$ is polynomially uniform
because the construction of P systems described in the proof can be done in maximum time (polynomial) by a Turing machine:
\begin{itemize}
\itemsep=0.95pt
\item the total number of objects is $5n+3s+12$;
\item the number of initial membranes is $3$;
\item the cardinality of the initial multisets is $3$;
\item the total number of evolution rules is $6n+3s+12$;
\item the maximal length of a rule (the number of symbols necessary to write a rule, both its left and right sides, the membranes, and the polarizations of membranes involved in the rule) is $n+3s-k+6$.
\end{itemize}

Thus, independent set problem can be solved in a time-free polynomial RS-steps
with respect to the size of the problem by recognizer P systems with active membranes and this concludes the proof.
\end{proof}

\section{Conclusions and remarks}\label{sec:Con}

In this work, with the biological reality: different biological processes take different times to be completed, which can
also be influenced by many environmental factors,
we give a time-free solution to independent set problem using P systems with active membranes,
which solve the problem independent from the execution time of the involved rules.

The notion of ``time-free solutions to decision problems by P systems with active membranes" was given in section \ref{sec:activeP},
it is possible that the execution time of a rule is inherently exponential with respect to the size of an instance,
thus, a more reasonable definition was given, we use the RS-steps to character how ``fast" the constructed P system with active membranes solves a decision problem in the context of time-freeness.

The solution to independent set problem in this work is semi-uniform in the sense that P systems are constructed from the instances of the problem. It remains open how can we construct a uniform time-free solution to independent set problem (that is, a P system can solve a family of instances of the same size). In section \ref{sec:tf-solution}, P systems constructed in the proof of Theorem \ref{Th:indepset} have the rules of types (a$'$), (b), (c), (d), (e$'$) and (f), it remains open whether the rule types used in this construction can be weakened, for instance whether changing the labels of the membranes created via rules of type (e$'$) is actually necessary or if the construction can be carried on without changing any of the membrane labels.

%
%

\subsection*{Acknowledgements}
This work was supported by National Natural Science Foundation of
China (61972138), the Fundamental Research Funds for the Central Universities
(531118010355), Hunan Provincial Natural Science Foundation of China (2020JJ4215), and
the Key Research and Development Program of Changsha (kq2004016).

\bibliographystyle{fundam}


\end{document}